\title{EDCCC2017amREady}
\title{Preserving Stabilization while {\em Practically} Bounding State Space}
\author{\IEEEauthorblockN{Vidhya Tekken Valapil\IEEEauthorrefmark{1},
Sandeep S. Kulkarni\IEEEauthorrefmark{2}}
\IEEEauthorblockA{Department of Computer Science,
Michigan State University\\
East Lansing, Michigan, USA. 48824\\
Email: \IEEEauthorrefmark{1}tekkenva@cse.msu.edu,
\IEEEauthorrefmark{2}sandeep@cse.msu.edu}}
\theoremstyle{definition}
\newtheorem{definition}{Definition}[section]
\theoremstyle{definition}
\newtheorem{remark}{Remark}[section]
\begin{document}

\maketitle
\newcommand{\regionsize}{\ensuremath{\mathcal{RS}}\xspace}
\newcommand{\SK}[1]{\footnote{Note from Sandeep #1}}
\newcommand{\minregion}{\ensuremath{r_{min}}\xspace}
\newcommand{\maxregion}{\ensuremath{r_{max}}\xspace}
\newcommand{\maxinc}{\ensuremath{max_{inc}}\xspace}

\newcommand{\br}[1]{\ensuremath{\langle #1 \rangle}\xspace}
\newcommand{\maxdep}{\ensuremath{v}\xspace}

\newcommand{\freelegbound}[1]{\ensuremath{[3*#1*\maxinc..3*(#1+1)*\maxinc+2*\maxinc-1]}\xspace}

\newcommand{\minfreecounter}[1]{\ensuremath{3*#1*\maxinc}\xspace}
\newcommand{\maxfreecounter}[1]{\ensuremath{3*(#1+1)*\maxinc + 2*\maxinc -1}\xspace}

\newcommand{\maxv}{max_r}
\newcommand{\maxr}{max_r}

\newcommand{\mindepcounter}[1]{\minfreecounter{(#1-2-\maxv)}\xspace}
\newcommand{\maxdepcounter}[1]{\maxfreecounter{#1}\xspace}

\newcommand{\maxrange}{\ensuremath {\maxinc*(11 + 3*\maxv)}}

\newcommand{\maxbound}{3*[\maxrange]\xspace}
\newcommand{\intervallength}{\ensuremath{\maxinc*(11+3*\maxv)}\xspace}

\newcommand{\newfloor}[1]{\ensuremath{\lfloor #1 \rfloor}\xspace}

\newcommand{\beforestepdistance}{\ensuremath{k_b}\xspace}
\newcommand{\afterstepdistance}{\ensuremath{k_f}\xspace}
\newcommand{\beforeregiondistance}{\ensuremath{r_b}\xspace}
\newcommand{\afterregiondistance}{\ensuremath{r_f}\xspace}

\newtheorem{theorem}{Theorem}
\newtheorem{lemma}{Lemma}

\newcommand{\regionzero}{region $0$\xspace}
\newcommand{\regionone}{region $1$\xspace}
\newcommand{\regiontwo}{region $2$\xspace}

\newcommand{\regionrr}{region $r$\xspace}
\newcommand{\regionrpone}{region $r+1$\xspace}
\newcommand{\regionrptwo}{region $r+2$\xspace}

\newcommand{\maxsteps}{\ensuremath{v}\xspace}

    \renewcommand{\topfraction}{0.9}	
    \renewcommand{\bottomfraction}{0.8}	
    \setcounter{topnumber}{2}
    \setcounter{bottomnumber}{2}
    \setcounter{totalnumber}{4}     
    \setcounter{dbltopnumber}{2}    
    \renewcommand{\dbltopfraction}{0.95}	
    \renewcommand{\textfraction}{0.07}	
    \renewcommand{\floatpagefraction}{0.7}	
    \renewcommand{\dblfloatpagefraction}{0.7}	

\begin{abstract}

Stabilization is a key dependability property for dealing with unanticipated transient faults, as it guarantees that even in the presence of such faults, the system will recover to states where it satisfies its specification. One of the desirable attributes of stabilization is the use of bounded space for each variable. 

In this paper, we present an algorithm that transforms a stabilizing program that uses variables with unbounded domain into a stabilizing program that uses bounded variables and (practically bounded) physical time. 
  While non-stabilizing programs (that do not handle transient faults) can deal with unbounded variables by assigning {\em large enough but bounded} space, stabilizing programs --that need to deal with arbitrary transient faults-- cannot do the same since a transient fault may corrupt the variable to its maximum value. 
%

We show that our transformation algorithm is applicable to several problems including logical clocks, vector clocks, mutual exclusion, leader election, diffusing computations, Paxos based consensus, and so on. Moreover, our approach can also be used to bound counters used in an earlier work by Katz and Perry for adding stabilization to a non-stabilizing program.   
By combining our algorithm with that earlier work by Katz and Perry, it would be possible to provide stabilization for a rich class of problems, by assigning {\em large enough but bounded} space for variables. 

\end{abstract}
\label{sec:intro}
\section{Introduction}

Self stabilization is one of the highly desirable dependability properties of distributed systems. Self-stabilization ensures that a system affected by a fault eventually stabilizes to or reaches a valid state in finite time. Stabilizing fault-tolerance is especially useful for dealing with unexpected transient faults. Transient faults can perturb the system to potentially arbitrary state, and guaranteeing that the program recovers to legitimate states ensures that the effect of these faults would only be temporary. 

A key desirable property of stabilizing systems is for them to utilize only variables with a bounded domain. For non-stabilizing systems (i.e. systems that do not handle transient faults), one could utilize counters that grow unbounded by ensuring that the value of the variable remains manageable during the length of the system computation. For example, one could argue that if a variable increases by at most $10$ every second and the system can run for at most $1000$ seconds then the value would never be more than $10,000$. However, for stabilizing systems with variables with a bounded domain, this argument does not hold true. This is because a transient fault could perturb the system to a state where the value has already reached $10,000$ i.e. the bound. The same argument was made by Lamport and Lynch \cite{LamportL90HandbookOfTheoCS} and we recall the argument in the following quote,


\begin{quote}
"Simply bounding the number of instance identifiers is of little practical significance, since practical bounds on an unbounded number of identifiers are easy to find. For example, with 64-bit identifiers, a system that chooses ten per second and was started at the beginning of the universe would not run out of identifiers for several billion more years. However, through a transient error, a node might choose too large an identifier, causing the system to run out of identifiers billions of years too soon--perhaps within a few seconds. A self-stabilizing algorithm using a finite number of identifiers would be quite useful, but we know of no such algorithm."
\end{quote}

However, the above quote about unbounded counters conflicts with usage in distributed systems where one often utilizes {\em time} as a variable whose value is theoretically unbounded. This is because with time there is a guarantee for convergence offered by protocols like NTP, i.e. any inconsistency can be easily detected and corrected in finite time. 

Based on this conflict --we seem to find the use of physical time (that is theoretically unbounded but practically bounded) acceptable but find the use of other unbounded variables unacceptable, so we consider the question: {\em Why is the usage of unbounded time reasonable, but the usage of other unbounded variables is not?}
  We observe that there is an inherent difference between the variable {\em time} and any other variable. In particular, detecting whether time is corrupted is much easier than detecting if other variables are corrupted. With the usage of redundancy, atomic clocks, etc., one can ensure that the time of a process is {\em close} to the correct value. In other words, if transient faults perturb a clock to a value that is far away from the current value, this corruption can be detected before using that clock value. 

Observe that this property of time may not be satisfied by other variables in a given program. For example, if we use logical clocks by Lamport \cite{lamport}, it is possible that clocks of two processes could genuinely differ by a large value.

Our goal in this paper is to identify a class of programs for which we can begin with a stabilizing program that relies on unbounded counters and transform it into a program with bounded counters and (theoretically unbounded but practically bounded) physical time. 

\textbf{Contributions of the paper. }
\begin{itemize}
\item We introduce the notion of free and dependent counters and utilize them to develop an algorithm that transforms a stabilizing program with unbounded counters into a stabilizing program with bounded counters. 
\item We demonstrate that our approach can be combined with that by Katz and Perry \cite{Katz1993}. Specifically, \cite{Katz1993} provides a mechanism to transform a \textit{non-stabilizing} program into a stabilizing program with unbounded counters. We show that the generated stabilizing program can then be transformed into a stabilizing program that uses bounded counters and (practically bounded) physical clock. 
\item We demonstrate our algorithm in the context of several classical problems such as consensus, logical/vector clocks, mutual exclusion, diffusing computation, etc. 
\item We show that even with trivially satisfiable parameters for a practical system (like clock drift of less than $100$ seconds, messages are either delivered or lost within an hour, etc), the size of counters in our programs is small. 
\end{itemize}

{\bf Organization of the paper. }
The rest of the paper is organized as follows: We define distributed programs and relate their execution with time in Section \ref{sec:preliminaries}. 
We define the notion of free and dependent counters in Section \ref{sec:freeanddependent} and illustrate them with the example of Lamport's logical clocks in Section \ref{sec:lamport}.   We present our algorithm in Section \ref{sec:algorithm} and discuss some applications of it in Section \ref{sec:application} (and some in Appendix \ref{sec:diff}, \ref{sec:vc} and \ref{sec:mutualexclusion}). Section \ref{sec:application} also demonstrates that our approach can be combined with the approach in \cite{Katz1993} by Katz and Perry, so that an existing program can be transformed into a stabilizing program that uses bounded variables and physical time.  Section \ref{sec:related} discusses related work and questions raised by our work. In Section \ref{sec:concl}, we present concluding remarks and future work. 
In Appendix \ref{sec:alg_illust} and \ref{sec:ProofofCorrectness}, we present the step-by-step illustration of our algorithm and its proof of correctness. 
In Appendix \ref{sec:SummaryofNotations} we summarize the notations used in this paper. 

\label{sec:Distributed Programs with Counters:}
\section{Preliminaries}
\label{sec:preliminaries}
\subsection{Modeling Distributed Programs}
\label{sec:modeldistributed}



\noindent A distributed program consists of a set of processes. 
Each process has a set of actions and the program executes in an interleaving manner where an action of some process is executed in every step. 
%
%
%
Execution of the program is captured with a set of program variables, each of which is associated with a domain.
With this intuition, we now formally define a program in terms of its variables and actions. Definitions \ref{def:program} to \ref{def:stabilization} are from standard literature such as \cite{ghosh2014distributed,distributedSnapshots,dij}.

\theoremstyle{definition}
\begin{definition}{\bf{(Program).}} 
\label{def:program}
A program $p$ is of the form $\langle V_p,A_p \rangle$, where $V_p$ is a set of 
variables, and $A_p$ is a set of actions 
that are of the form $guard \rightarrow statement$, where $guard$ is a condition involving the variables in $V_p$ and the $statement$ updates a subset of variables in $V_p$.
\end{definition}

\begin{definition}{\bf{(State).}} 
A state $s$ of program $p$ is obtained by assigning each variable in $V_p$ a value from its domain. 
\end{definition}

\begin{definition}{\bf{(Enabled).}}
An action of the form $guard \rightarrow statement$ is enabled in state $s$ iff $guard$ evaluates to true in state $s$.
\end{definition}

\begin{definition}{\bf{(Computation).}}
A computation is a sequence of states $s_0, s_1, s_2,\cdots$, where a state $s_{l+1}$, $l \geq 0$, is obtained by executing some enabled action in state $s_l$. 
\end{definition}

\begin{remark}
For the sake of simplicity, we assume that there is at least one action enabled in state $s$. If such an action does not exist, we pretend that the program has an action corresponding to a self-loop at state $s$. 
\end{remark}

\begin{definition}{\bf{(Computation Prefix).}}
A finite sequence of states $s_0, s_1, s_2,$ $\cdots, s_n$ is a computation-prefix of program $p$ iff it is a prefix of some computation of $p$. 
\end{definition}
\noindent Finally, we recall the definition of stabilization from \cite{dij}:
\begin{definition}
\label{def:stabilization}{\bf{(Stabilization).}} 
Program $p$ is stabilizing to $S$, where $S$ is a set of states, iff
\begin{itemize}
\noindent\item Starting from an arbitrary state, every computation of $p$ reaches $S$, and
\item Starting from a state in $S$, no computation of $p$ reaches a state outside $S$.
\end{itemize}
\end{definition}
\subsection{Relating  Program Computation  and Time}
\label{sec:sysmodel}
As discussed in the Introduction, our goal is to combine the existence of (reasonably synchronized) global time achieved through services such as NTP with reasonable timing properties in the given algorithm. Since the definitions in Section \ref{sec:modeldistributed} are time-independent, in this section, we identify the role of time and the relation between program steps and time in our algorithm. 

Our algorithm relies on NTP-like algorithm to provide physical clock for each process, which is \textit{close to an abstract global clock} (this global clock is not available to processes themselves). We partition the abstract global time say $t$ into regions of size $\regionsize$. Thus, the (global) region is identified by $\newfloor{\frac{t}{\regionsize}}$. Likewise, each process $j$ is also associated with a physical time, say $t_j$. This time is also mapped to the region of process $j$. Thus, the region of process $j$ is $\newfloor{\frac{t_j}{\regionsize}}$. 
Note that due to clock drift the global region and region associated with process $j$ may not be identical. Likewise, region associated with process $j$ may not be the same as that associated with process $k$. 

We \textit{choose \regionsize such that} (1) the region identified by the process from its own local physical clock differs from the region identified by the global clock by at most $1$, and (2) the regions identified by two processes from their local clocks differ by at most $1$. 
Given the current technology, choosing $\regionsize$ to be a few milliseconds would be reasonable for many existing systems to satisfy this assumption.  In our analysis, we assume \regionsize to be $100$ seconds. Note that achieving clock synchronization to be within $100$ seconds is trivial in any practical system. 


For a given computation, we identify a program subsequence that occurred in a given (abstract) global region. 
Although the processes themselves are not aware of this (abstract) global time, this association allows us to model assumptions such as any message would be delivered within time $\delta$ or it will be lost.
We can model such assumptions in terms of regions; 
if we utilize regions to be $100$ second long and we are guaranteed that messages would either be received or lost within one hour (3600 seconds) then this would mean that a message has a lifetime of at most $36$ regions. 

\section{Free Counters and Dependent Counters}
\label{sec:freeanddependent}
In this section, we define the notion of free and dependent counters that form the basis of our transformation algorithm. 
However, before we do that, we focus on the structure of the variables in the program. In particular, for program $p$, we partition its variables $V_p$ into two types: simple variables and complex variables. Simple variables are those variables with domain that is either a finite set or $N$, the set of natural numbers. And, complex variables are {\em collections} (e.g., set, sequence, list, etc.) of simple variables, and the constituent variables can be removed/added dynamically. To define the notion of free and dependent counters, we will unravel the structure of a complex variable and focus only on the simple variables contained in it. For example, if the program contains a complex variable, say $C$, which is a set and its current value is $\{ 3, 5, 7\}$, then we visualize this as having three simple variables $c_1$, $c_2$ and $c_3$ whose values are 3, 5 and 7 respectively. 

With this intuition, we can view a program $p$ with variables $V_p$ as an equivalent program with variables $SV_p$,  where $SV_p$  is a dynamically changing collection of simple variables.
Moreover, the domain of any variable in $SV_p$ is either finite or equal to $N$, the set  of natural numbers. 

\begin{remark}
A reader might wonder why we do not define program $p$ in terms of $SV_p$ in the first place.
As mentioned above, $SV_p$ is a dynamic set that has a flexible size. 
To update a dynamic set of variables one would require a dynamic or infinite set of actions. Without making explicit efforts, such a model has the potential to model programs that are not recursively enumerable. 
Our modeling with complex variables in $V_p$ avoids this problem, as the set of actions is always finite. 

\end{remark}

The set $SV_p$ is dynamic.
We say that a variable in $SV_p$ is a {\bf{permanent variable}} if it is guaranteed to be present in every state of $p$. For example, any simple variable in $V_p$ would be a permanent variable since it will be present in $SV_p$ at all times. A variable that is not a permanent variable is called a {\bf{temporary variable}}. 

\theoremstyle{definition}\label{def:Valuation of variable in $V_p$}
\begin{definition}{\bf{(Valuation of variable in $V_p$).}} Let $x$ be a variable in $V_p$ and let $s$ be a state of program $p$. $x(s)$ denotes the value of $x$ in state $s$. 
\end{definition}

We overload this definition for $SV_p$. Specifically, if variable $x$ is present in $SV_p$ in the given state, the value of that variable is defined in the same manner as in the above definition. And, if the variable $x$ is not present in that state (entries in complex variables in $V_p$ or their equivalent simple variables in $SV_p$ may be added/removed), we denote its value as $\bot$. In other words,
 
\theoremstyle{definition}\label{def:Valuation of variable in $SV_p$}
\begin{definition}{\bf{(Valuation of variable in $SV_p$).}} Let $x$ be a variable in $SV_p$ and let $s$ be a state of program $p$. If $x$ is present in state $s$, then $x(s)$ denotes the value of $x$ in state $s$. And, if $x$ is not present in $s$, then we denote it as $x(s) = \bot$.
\end{definition}
\noindent With the help of $SV_p$ and permanent/temporary variables, we define the notion of free and dependent counters. Intuitively, a free counter is a permanent variable whose value never decreases. Moreover, if we increase the value of the free counter in the final state of a computation-prefix then the resulting sequence is also a valid computation prefix of the given program. Formally,

\begin{definition}\label{def:free_counter}{\bf{(Free counter).}} A permanent variable $fc$ of program $p$ is a free counter iff for any computation prefix $\rho = s_0, s_1, s_2,\cdots,s_l$ of $p$ the following conditions hold:

(i) $\forall w: 0 \leq w < l : fc(s_{w+1}) \geq fc(s_w)$,

(ii) 
$\rho' = \rho + s_{l+1}$ is also a valid computation prefix, where state $s_{l+1}$ is reached from state $s_l$ by increasing the value of $fc$ (and leaving other variables unchanged), and $\rho + s_{l+1}$ denotes concatenation of $\rho$ and $s_{l+1}$.

%
%
\end{definition}
Thus, if $fc(s_l)$ is the value of the free counter $fc$ in program $p$ in state $s_l$, then $fc(s_{l+1})$ (i.e., the value of the free counter $fc$ in the subsequent state $s_{l+1}$) is never less than its value in the previous state $s_l$. 
Also, if $\rho = s_0, s_1, s_2,\cdots,s_l$ is a valid computation prefix of $p$, appending state $s_{l+1}$ where $fc(s_{l+1}) = fc(s_l) +d$ (where $d\geq0$) to $\rho$ results in another valid computation prefix $\rho' = \rho + s_{l+1}$.
%

Next, we define the notion of dependent counters. A dependent counter is a temporary variable. We require that when this variable is created/added, its value is set to the value of some free counter within at most  $\beforestepdistance$ preceding steps.
Moreover, after $\afterstepdistance$ steps,  this temporary variable is removed.
And, in between the value remains unchanged. 

\begin{remark}
Note that this requirement is not restrictive, because essentially, the requirement is just that the value assigned to the dependent counter is somehow related to a free counter in the recent past. For example, if variable $dc$ is set to $fc-5$ where $fc$ is a free counter, then we can treat it as having two variables $dc1$ and $dc2$, where setting $dc$ to $fc-5$ is modeled as setting $dc1$ to be same as $fc$ and $dc2$ to $-5$, and using $dc1+dc2$ instead of $dc$. Note that the latter is a bounded variable whereas the former can be used to satisfy the requirements of dependent counters. Likewise, setting $dc$ to $2*fc$ or $fc^2 + 10$ would be acceptable as well. Since there are too many such choices, to keep the transformation algorithm simple, we use the above definition. However, in practice it may require some {\em syntactic} tweaking of a given program without affecting its properties. 
\end{remark}

\begin{remark}
The goal of this requirement is that the value of the counter will eventually become obsolete and hence will no longer affect the program execution. 
We discuss this further in Section \ref{sec:application}, where this requirement is handled by syntactic changes to a given program.
\end{remark}

\begin{definition}\label{def:dep_counter}{\bf{((Step based) Dependent counter).}} 
A temporary variable $dc$ of program $p$ is a (\beforestepdistance,\afterstepdistance)-(step-based) dependent counter iff for any computation $\rho = s_0, s_1, s_2,\cdots$ of $p$ the following condition holds: $\forall a : a \geq 0: $

\begin{enumerate}
\noindent\item $dc(s_a) = \bot \wedge dc({s_{a+1}}) \neq \bot \\
\Rightarrow \exists w : a-\beforestepdistance \leq w \leq a +1 :  dc({s_{a+1}}) = fc(s_{w})$,\\ where $fc$ is a free counter in $p$
\item $dc(s_a) \neq \bot \Rightarrow \forall w : w > a+ \afterstepdistance : dc(s_{w}) = \bot$
\item $dc(s_a) \neq \bot \wedge dc(s_{a+1}) \neq \bot \ \ \ \Rightarrow \ \ \ dc(s_{a}) = dc(s_{a+1})$

\end{enumerate}
\end{definition}

Recall that one of the assumptions in Section \ref{sec:sysmodel} was intended to translate the \textit{steps of a program} into the \textit{corresponding time.}
Based on this assumption, next we define the notion of (region-based) dependent counters where the value of the dependent counter is based on the value of free counters in preceding regions. 
In particular, we translate $\beforestepdistance$ and $\afterstepdistance$ in Definition \ref{def:dep_counter} into corresponding region values. We treat a counter as $(\beforeregiondistance, \afterregiondistance)$-dependent counter (1) when the dependent counter is set to a value different from $\bot$, it is set to the value of some free counter in at most $\beforeregiondistance$ \textbf{(global) regions} in the past, and (2) after the value of the dependent counter is set to a value different from $\bot$, within $\afterregiondistance$ \textbf{(global) regions} it is set back to $\bot$. Hence, we define region based dependent counters as follows:

\begin{definition}\label{def:dep_counterregion}{\bf{((Region based) Dependent counter).}} 
A temporary variable $dc$ of program $p$ is a (\beforeregiondistance,\afterregiondistance)-dependent counter iff for any computation $\rho = s_0, s_1, s_2,\cdots$ of program $p$ the following conditions hold: $\forall a : a \geq 0: $
\begin{enumerate}
\noindent\item$dc(s_a)=\bot$ $\wedge$ $ dc({s_{a+1}}) \neq \bot$ $\wedge$ $\{s_{a+1}$ is in (global) region $r$\}
$\Rightarrow \exists w :  w \leq a +1 :  dc({s_{a+1}}) = fc(s_{w})$,\\ where $fc$ is a free counter in $p$ and $s_w$ is in (global) region [$r - \beforeregiondistance$..$r$]
\item $dc(s_a) \neq \bot$\\ 
$ \Rightarrow \forall w :$ region of $s_w$ is greater than $r+\afterregiondistance : dc(s_{w}) = \bot$
\item $dc(s_a) \neq \bot \wedge dc(s_{a+1}) \neq \bot \ \ \ \Rightarrow \ \ \ dc(s_{a}) = dc(s_{a+1})$
\end{enumerate}

\end{definition}

\begin{remark}
Observe that the above definition overloads the definition of step-based dependent counter. Specifically, we use the term $(\beforestepdistance, \afterstepdistance)$-(step-based) dependent counter while viewing the counter in terms of number of steps. And, we use $(\beforeregiondistance, \afterregiondistance)$ while viewing it in terms of regions. In the rest of the paper, unless specified otherwise, {\bf we assume that dependent counters are specified in terms of regions.}
\end{remark}

\remark{In a given system, irrespective of what kind of collection  (a set or a list or a sequence) that a complex variable may correspond to, our algorithm focuses only on bounding each constituent simple variable or entry in the complex variable, whereas the overall structure or the complex variable itself remains unaffected by the algorithm. 
In other words, operations associated with the data structure itself (e.g., the next element in the list) are performed as is. However, any operation on the data item (e.g., if first item in the list is equal to $0$) would be affected by our transformation algorithm. In this case, before the equality operation is performed, we apply the transformation based on the properties (defined in the subsequent discussion) of that list item. 

%
%
%

\bibliographystyle{plain}
\label{sec:Examples}

\section{Illustrating Free and Dependent Counters} 
\label{sec:lamport}

In this section, we illustrate our definitions of free and dependent counters with the help of Lamport's logical clocks \cite{lamport}.
In this program, the processes in the system communicate through messages. 
At any point in time each process $j$ has a logical clock value $cl.j$ associated with it, and $cl.j$ increases whenever an event occurs at $j$.

Next, using our formalism from Section \ref{sec:modeldistributed}, we specify the actions of this program. Also, we identify the notion of simple versus complex variables, dependent versus free counters, etc. The actions of a process, say $j$, in this program are as follows:
\begin{enumerate}
\item Action Local Event

$true \longrightarrow cl.j = cl.j+d;$ 
\item Action Send Event, say to process $k$

$true \longrightarrow cl.j = cl.j+d; cl.m = cl.j;$

$channel_{j, k} = channel_{j,k} \cup \{m\}$.

\item Action Receive Event, say from process $k$

$m \in channel_{j,k}  \longrightarrow cl.j = max(cl.j, cl.m)+d;$

$channel_{j, k} = channel_{j,k} - \{m\} $
\end{enumerate}

where $d$ is any positive integer that can be different at different instances of the actions. 
%
%
Observe that for every process $j$, $cl.j$ is a permanent variable. The variable $channel_{j,k}$ is a {\em complex} variable which contains timestamps of messages in transit. If we unravel this variable, we get multiple timestamps, each corresponding to a message in transit.

\begin{theorem}
\label{thm:lamportcounter}
In Lamport's logical clock program, 

\label{subsec:proof_lamport}


\begin{enumerate}
\item $cl.j$ is a free counter

{\bf Proof.}
The permanent variable $cl.j$ is a free counter of process $j$ that satisfies definition \ref{def:free_counter}. In particular if $\langle s_0,s_1,s_2,\cdots,s_n\rangle$ is a computation prefix of $p$, then:

(i) at a given state $s_{l}$ when an event occurs, the value of $cl.j$ is computed as $cl.j(s_{l}) = cl.j(s_{l-1})+d$, $d > 0$, or $cl.j(s_{l}) = max(cl.j(s_{l-1}),cl.m) + d$, $d > 0$, i.e., it is higher than the logical clock value of $j$ in its previous state $s_{l-1}$. 
Thus $cl.j$ is an unbounded counter that has the form $cl.j(s_{l}) > cl.j(s_{l-1})$ i.e. it never decreases.
		
(ii) Also, if $\rho = \br{s_0, s_1, s_2,\cdots,s_l}$ is a valid computation prefix of $p$, appending state $s_{l+1}$ to $\rho$, where $cl.j(s_{l+1}) = cl.j(s_l)+ d$ results in $\rho' = \rho + s_{l+1}$ which is also a valid computation prefix that contains one extra event. 
In other words, an increase in the logical clock value of a process by $d$ continues to preserve the correctness of the overall system. 
Thus the logical clock value associated with any process is a \emph{free counter}.

 \item Each entry in $channel_{j,k}$ is a (0,\maxsteps)-(step-based) dependent counter provided any message is guaranteed to be received within $\maxsteps$ steps. 

{\bf Proof.}
Entries in $channel_{j,k}$ i.e., message timestamps are \emph{dependent counters} in the system, since they are temporary variables that have the form outlined in definition \ref{def:dep_counter}. In particular, let $\langle s_0,s_1,s_2,\cdots \rangle$ be a computation of $p$ and let $cl.m$ denote the timestamp of a message $m$ in $channel_{j,k}$ . Then, $cl.m(s_j)$ is equal to $\bot$ when $m$ is not in transit (before transmission or after reception) and $cl.m$ equals the timestamp of $m$ when message $m$ is in transit.

(i) if there exists a state $s_a$ such that $cl.m(s_a) = \bot$ and $cl.m(s_{a+1}) \neq \bot$ then this corresponds to sending of $m$. In this case, $cl.m(s_{a+1})$ is set to $cl.j(s_{a+1})$. It follows that this satisfies condition $1$ of Definition \ref{def:dep_counter}.



(ii) if $cl.m(s_a) \neq \bot$ in some state $s_a$ then it means that message $m$ is in transit in state $s_a$. Since we assume that every message would be delivered in \maxsteps steps, it follows that after $\maxsteps$ steps, in state $s_{a+\maxsteps}$, message $m$ will no longer be in transit. It follows that this satisfies condition $2$ of Definition \ref{def:dep_counter}.

(iii) a message $m$ is timestamped only once, i.e. when it is added to $channel_{j,k}$ and $cl.m$ is set to $\bot$ only after it is removed from $channel_{j,k}$. When $m$ is in transmission the value of $cl.m$ is never changed. This satisfies condition $3$ of Definition \ref{def:dep_counter}.


\end{enumerate}

\end{theorem}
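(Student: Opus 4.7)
The plan is to verify both parts of the theorem by matching the program semantics of Lamport's algorithm against the formal conditions of Definitions \ref{def:free_counter} and \ref{def:dep_counter}. Since both claims are existential/universal conditions over arbitrary computation prefixes, I would fix an arbitrary computation prefix $\rho = \langle s_0, s_1, \ldots, s_l \rangle$ at the outset and then reason by case analysis on which of the three actions (Local Event, Send Event, Receive Event) produces each transition.

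For part (1), to show $cl.j$ is a free counter, I would first argue that $cl.j$ is a permanent variable of $p$ (it is declared at every process and never removed), and then check the two conditions separately. For condition (i), the non-decreasing property, I would walk through each of the three actions and observe that in every case $cl.j$ is either unchanged (for actions of other processes) or assigned a value of the form $cl.j + d$ or $\max(cl.j, cl.m) + d$ with $d > 0$; in both cases $cl.j(s_{w+1}) \geq cl.j(s_w)$. For condition (ii), given $\rho$ ending in $s_l$, I would exhibit a one-step extension $s_{l+1}$ in which $cl.j$ has increased by some value $d' > 0$: the Local Event action at $j$ with parameter $d = d'$ is enabled in $s_l$ (its guard is simply $true$), and executing it produces exactly the desired $s_{l+1}$. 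Hence $\rho \cdot s_{l+1}$ is a valid computation prefix.

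For part (2), under the assumption that any message is delivered within \maxsteps steps, I would unravel the complex variable $channel_{j,k}$ into its constituent simple variables $cl.m$ (one per message token $m$) and verify the three conditions of Definition \ref{def:dep_counter} with $\beforestepdistance = 0$ and $\afterstepdistance = \maxsteps$. Condition 1 is established by noting that the only transition taking $cl.m$ from $\bot$ to a defined value is the Send Event at process $j$, which sets $cl.m = cl.j$ in state $s_{a+1}$; choosing the witness $w = a+1$ (permissible since the window is $[a - 0, a+1]$) and using the fact that $cl.j$ is a free counter (by part (1)) discharges the requirement. Condition 2 follows from the delivery-bound assumption: once $m$ is inserted into $channel_{j,k}$, it must be removed by a Receive Event within $\maxsteps$ steps, at which point $cl.m$ becomes $\bot$ and stays $\bot$ thereafter (no action reinserts the same token). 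Condition 3 holds because inspecting the three actions shows that no action writes to $cl.m$ while $m$ is in transit; $cl.m$ is written only at the Send Event that inserts $m$.

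The main obstacle, modest as it is, is condition (ii) of the free-counter definition: one must justify that an arbitrary positive increase of $cl.j$, unaccompanied by any other change, is realizable as a single program step. This is where the flexibility of the Local Event action (with its arbitrary positive increment $d$ and $true$ guard) is essential; a proof that tried to use Send or Receive here would fail because those actions also mutate $channel_{j,k}$. For the dependent-counter part, the non-trivial ingredient is the delivery-bound assumption, which is explicitly part of the hypothesis; without it condition 2 could be violated by a permanently stuck message.
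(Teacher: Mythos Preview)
Your proposal is correct and follows essentially the same route as the paper's own proof: case analysis over the three actions to establish monotonicity and the Local Event action to realize condition~(ii) of Definition~\ref{def:free_counter}, then verifying the three clauses of Definition~\ref{def:dep_counter} for $cl.m$ via the Send Event, the delivery bound, and the immutability of message timestamps. Your write-up is in fact slightly more explicit than the paper's in naming the Local Event action as the witness for condition~(ii) (and observing that Send/Receive would not work because they touch $channel_{j,k}$) and in identifying the witness index $w = a+1$ for condition~1 of the dependent-counter definition, but the underlying argument is the same.
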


\section{Transformation Algorithm}
\label{sec:algorithm}

Our transformation focuses on a three-step approach. In the first step (Section \ref{sec:stepone}), we focus on revising a given program such that the free counters in the program, while still being unbounded, are closely related to the physical time. In the second step (Section \ref{sec:steptwo}), we do the same for dependent counters. Finally, in the third step (Section \ref{sec:stepthree}), we revise the program obtained in the second step such that all counters become bounded. 
Due to reasons of space, we illustrate our algorithm in the context of the example in Section \ref{sec:lamport} in Appendix \ref{sec:algexampleone}.



\begin{figure}
  \hfill\includegraphics[width=80mm,height=27mm, scale=1.0]{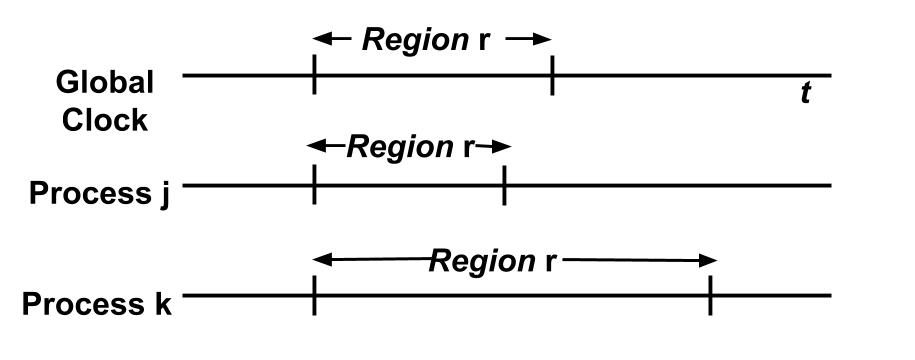}\hspace*{\fill}
  \caption{Regions with respect to global clock and processes}
  \label{fig:globalregion_vs_processregion}
\end{figure}

Our algorithm utilizes the observation that while the counters used in a program can grow unbounded, their growth in a given time period (assuming no transient faults) can be computed. In particular, consider a computation within one region as determined by the global clock. We assume that the growth of the counter (from its original value) would be bounded by a constant in this region. As an illustration, for the program in Section \ref{sec:lamport}, we can identify this bound by considering the number of events that could be created in the given region. Note that the region from the perspective of the global clock may not be the same as that of a process, (cf. Figure \ref{fig:globalregion_vs_processregion}). Hence, from the perspective of the process, the growth of the counter in {\em its} region may be different.

\subsection{Algorithm for Step 1: Adjusting Free Counters}
\label{sec:stepone}

Let $\maxinc$ be the maximum increase in any free counter in one global region.
Since free counters can be increased at will, the natural approach is to try to keep the value of the free counter in region $r$ to be $[r*\maxinc .. (r+1)*\maxinc]$. It turns out that this is not feasible since the process regions may not be identical. So, a process in region $r+1$ may send a message to process in region $r$ causing it to receive values that are outside this range. Hence, in our algorithm, we proceed as follows: we \textit{try to ensure that}
%
any free counter is in the range $[3r*\maxinc .. 3(r+1)*\maxinc-1]$. However, in practice, since the regions of two processes may not be identical, we will ensure that the value of the free counter is in the range $[3r*\maxinc .. 3(r+1)*\maxinc+2*\maxinc-1]$.\footnote{The reason for parameters 2 and 3 in this equation is discussed in Appendix \ref{sec:algexampleone}, where we illustrate the first step with the problem of logical clocks.} Each process will first ensure that this constraint is satisfied. If it is not, it will restore the value of the free counter to $3r*\maxinc$, where $r$ is {\em its} current region. This can be achieved by checking the values of the free counter (1) as soon as the region of the process changes (II.1 in Figure \ref{fig:algo}), or (2) the process updates its free counter as part of its actions (III.4 in Figure \ref{fig:algo}), or (3) the process uses the free counter (in evaluating guard of an action) (III.1 in Figure \ref{fig:algo}). Thus, the algorithm for transformation is as shown in Figure \ref{fig:algo}.

\begin{figure}[t]
{
\footnotesize
\begin{tabbing}
\hspace*{1mm} \= \hspace*{5mm} \= \hspace*{5mm} \= \hspace*{5mm} \= \hspace*{5mm} \= \hspace*{5mm} \= \hspace*{5mm} \= \hspace*{5mm} \= \hspace*{5mm} \= \kill\\
\>{\bf I. Variables:}\\
\>\>$\maxinc : $ maximum increase in any free counter in one\\ 
\>\>global region\\
\>\>$r.j : $ region of process $j$ determined from its local clock\\
\>\>$fc.j : $ free counter of process $j$\\
\>\> $dc.j : $ \br{\beforeregiondistance,\afterregiondistance} dependent counter of process $j$\\
\>\>\> // The algorithm below is repeated for each free and\\ 
\>\>\> // dependent counter\\
\>\>$\maxv :$ maximum ($r_b+r_f$) value for dependent counters\\
\>\> $MAXBOUND = \maxbound$\\
\>\> $minfree = \minfreecounter{r}$\\
\>\> $maxfree = \maxfreecounter{r}$\\
\>\> $mindep = \mindepcounter{r}$\\
\>\> $maxdep = \maxdepcounter{r}$\\

\ \\
\>{\bf II. Whenever region associated with $j$ changes:}\\
\> \> 1. For each free counter $fc.j$ in $j$:\\
\>\>\>checkfc($fc.j, r$)\\
\> \> 2. For each dependent counter $dc.j$ in $j$:\\
\>\>\>checkdc($dc.j, r$)\\
\ \\
\> {\bf III. For each action $guard \longrightarrow statement$ of original program:}\\
\>\>\> // {\em transformed program maintains free/dependent}\\
\>\>\> // {\em counters in a modulo form. First convert it to}\\
\>\>\> // {\em corresponding integer format.} \\
\>\>1. \> For each free counter $fc$ in the $guard$:\\
\>\>\> $intfc = convertfc(fc)$\\
\>\>2. \> For each dependent counter $dc$ in the $guard$:\\
\>\>\> $intdc = convertdc(dc)$\\

\>\>3. \>  Evaluate $guard$s with updated counters and\\
\>\>\> select an action to execute\\
\>\>\> // Note: original program actions utilize unbounded counters.\\
\>\> 4. \> Whenever $intfc$  is updated in the statement\\
\>\>\>\> checkfc($intfc, r$)\\
\>\>\>\> set $fc = fc \ mod \ MAXBOUND$\\
\>\>5. \>  Whenever $intdc$  is updated in the statement\\
\>\>\>\> checkdc($intdc, r$)\\
\>\>\>\> set $dc = dc \ mod \ MAXBOUND$\\

\ \\
\>{\textbf{IV. Function checkfc$( fc, r)$:}}\\
\>\>If ( ($fc < minfree$) $||$\\
\>\>\>\hspace*{2mm}\=($fc > maxfree$) ) then:\\
\>\>\>\>\>set $fc := minfree$ \\
\>\>\>\>\>//{\em reset free counter to minimum value}\\ 
\>\>\>\>\>{\em in the legitimate range}\\    
\ \\ 
\>{\textbf{V. Function checkdc$( dc, r)$:}}\\
\>\>If ($(dc < mindep)$ $||$\\ 
\>\>\>($dc > maxdep$) then:\\
\>\>\>\>\>set $dc := mindep$\\ 
\>\>\>\>\>//{\em reset dependent counter to minimum}\\ 
\>\>\>\>\>{\em value in the legitimate range}\\
\ \\
\>{\textbf{VI. Function convertfc$(x)$:}}\\
\>\> Find $y$ such that $x = y \ mod \ MAXBOUND$ and\\ 
\>\> $y$ is in range [minfree .. maxfree]\\
\>\> If no such $y$ exists, return minfree\\
\ \\
\>{\textbf{VII. Function convertdc$(x)$:}}\\
\>\> Find $y$ such that $x = y \ mod \ MAXBOUND$ and\\ 
\>\> $y$ is in range [mindep .. maxdep]\\
\>\> If no such $y$ exists, return mindep\\
\end{tabbing}
\vspace*{-3mm}
\caption{Our Transformation Algorithm}
\label{fig:algo}
}
\vspace*{-7mm}
\end{figure}

\subsection{Algorithm for Step 2: Adjusting Dependent Counters}
\label{sec:steptwo}


Let $dc$ be a  $(\beforeregiondistance, \afterregiondistance)$ region-based dependent counter. 
Now, we identify the possible values of $dc$ that may happen under legitimate states, i.e., in the absence of faults. 

Consider the case where a process is in \regionrr, the value of its free counter is in the range \freelegbound{r}. At this time, the global region is at least $r-1$. If the counter $dc$ is used in global region $r-1$, then it was initialized in global region greater than or equal to $r-1-\afterregiondistance$. Moreover, the value it was set to can only come from a free counter $\beforeregiondistance$ regions earlier. In other words, the value of $dc$ was set to the value of a free counter in global region $r-1-\beforeregiondistance-\afterregiondistance$ or higher. Since the process region and global region may differ by $1$, the region of the process that set the value of $dc$ is at least $r-2-\beforeregiondistance-\afterregiondistance$. Hence, the value of $dc$ is at least \minfreecounter{(r-2-\beforeregiondistance-\afterregiondistance)} Moreover, the maximum value of $dc$ is the maximum value of some free counter, i.e., it is \maxfreecounter{r}.

Hence, in Step 2 of our algorithm, we ensure that the value of a given dependent counter is always within this range. If it is not in this range, we set it to the minimum permitted value in this range, i.e., we set it to \minfreecounter{(r-2-\beforeregiondistance-\afterregiondistance)}. Similar to free counters, this is done (1) as soon as the region of the process changes (II.2 in Figure \ref{fig:algo}), or (2) when the process sets a dependent counter as part of its actions (III.5 in Figure \ref{fig:algo}), or (3) the process uses the dependent counter (in evaluating guard of an action) (III.2 in Figure \ref{fig:algo}). 
Each dependent counter is characterized by parameters $\beforeregiondistance$ and $\afterregiondistance$. Let the maximum value of $\beforeregiondistance+\afterregiondistance$ for any dependent counter be $\maxv$, Thus, if a process is in \regionrr, its dependent counters must be in $[\minfreecounter{(r-2-\maxv)}..\maxfreecounter{r}]$.

\subsection{Algorithm for Step 3: Bounding the Counters}
\label{sec:stepthree}

Steps 1 and 2 focused on relating free and dependent counters to physical time. Recall that if a process is in \regionrr, then any free counter is in the range $[\minfreecounter{r}..\maxfreecounter{r}]$.  
Recall that the value of any dependent counter is in the range [\mindepcounter{r}..\maxdepcounter{r}]. 
Observe that the size of the above range is $\maxrange$. In Step 3, we revise the program so that instead of maintaining each counter to be an unbounded variable, we only maintain it in modulo $m$ arithmetic, where $m$ is 3 times the range of any dependent counter. In other words, $m= \maxbound$.


\begin{figure}
  \hfill\includegraphics[width=80mm, scale=0.7]{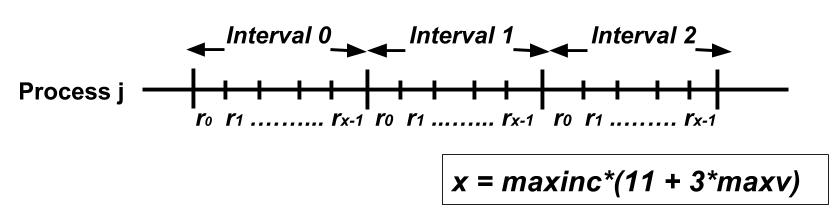}\hspace*{\fill}
  \caption{Determining bounds from Intervals}
  \label{fig:bounded_intervals_for_counters}
\end{figure}

Next, we give a brief description of why the value \maxbound is chosen. Towards this end, we split \maxbound into three {\em intervals},  $[0..\intervallength-1],[\intervallength..2\intervallength-1]$ and $[2\intervallength..3\intervallength-1]$. \textbf{Each interval corresponds to the range of dependent counters.}


First observe that the interval is long enough to ensure that all free counters stabilize to their expected values. 
%
Regarding dependent counters, consider the case where a process, say $j$, is about to move from Interval $0$ to Interval $1$. Since the program can be perturbed to an arbitrary state, at this point, a dependent counter could be in any interval. However, any dependent counter that exists when $j$ is about to move to Interval $1$ will be removed from the system before process $j$ moves to Interval $2$. (Note that the length of the interval was chosen in order to guarantee this property.)
	
Now, consider the computation of the program where process $j$ just enters Interval $1$ and continues its execution until it enters Interval 2. During this computation, process $j$ will discard all dependent counters in Interval $2$. This is due to the fact that only valid values for dependent counters in Interval $1$ are from Interval $0$ or Interval $1$. 
Moreover, given the life-span of dependent counters, any dependent counter generated in Interval $0$ will be removed before $j$ enters Interval $2$.
In other words, when the first process enters Interval $2$, all dependent counters are from Interval $1$. Moreover, this property will be preserved for all subsequent intervals. 

\subsection{Correctness of Transformation Algorithm}

Let $p$ be the given stabilizing program. Let $p'$ be the program obtained after Steps 1, 2 and 3. Starting from a legitimate state, we show that there is a mapping of a computation of $p'$ to a computation of $p$. And, we also show that starting from an arbitrary state, any computation of $p'$ has a suffix that maps to a computation of $p$. Taken together, this shows that if $p$ is stabilizing then so is $p'$. For reasons of space, we present the proof in Appendix \ref{sec:ProofofCorrectness} in Theorem \ref{thm:correct}.

\begin{figure*}
\centering     
\subfigure[]{\label{fig:LogicalClocks}\includegraphics[width=55mm,height=51mm]{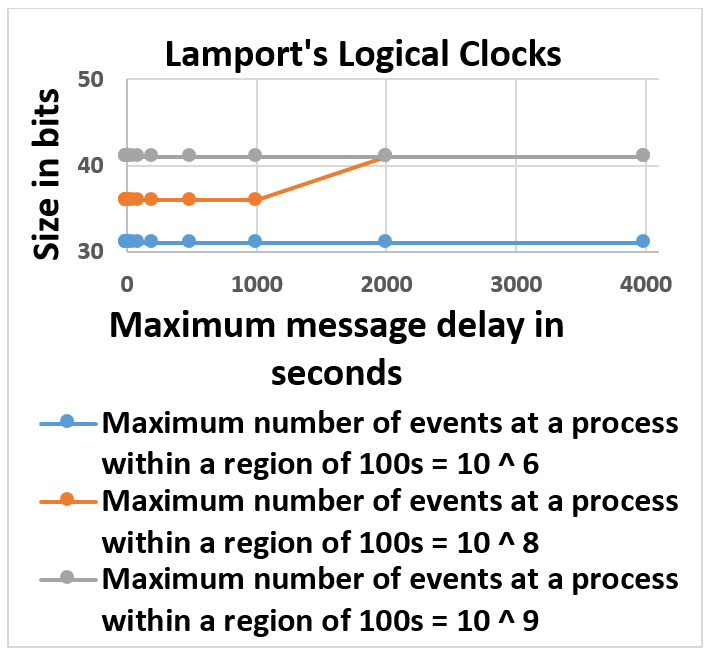}}
\subfigure[]{\label{fig:Katz_Perry}\includegraphics[width=55mm,height=51mm]{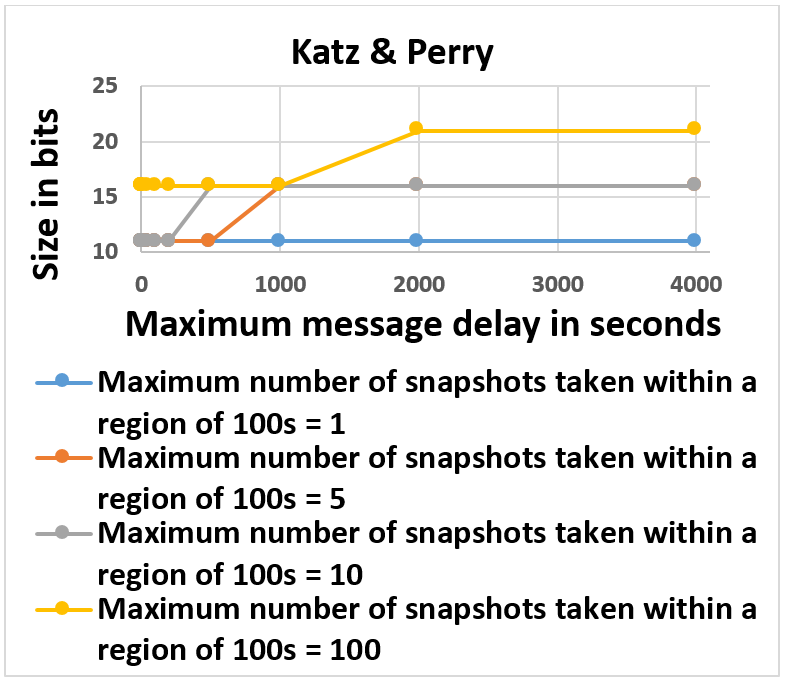}}
\subfigure[]{\label{fig:PaxosConsensus}\includegraphics[width=55mm,height=51mm]{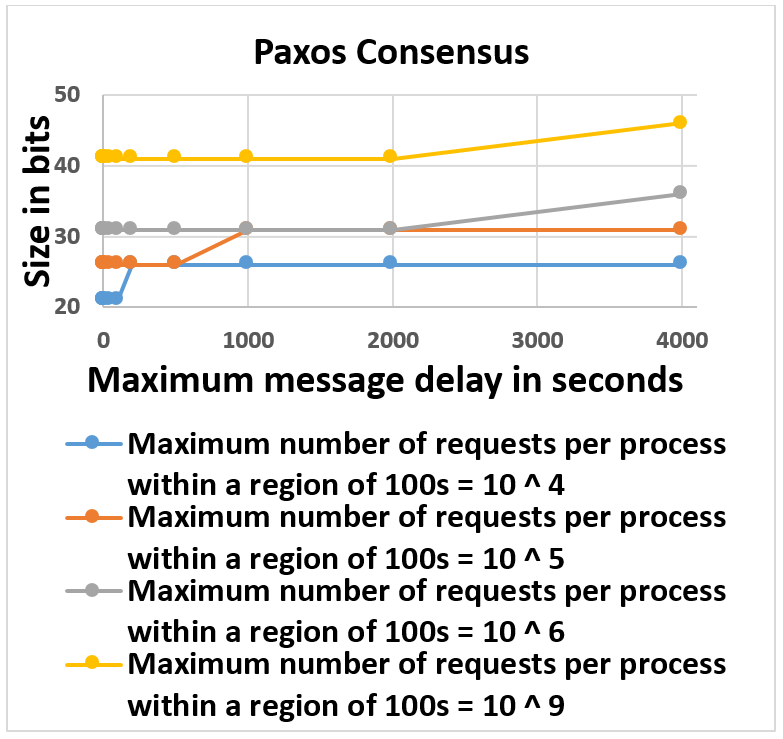}}
\caption{Analysis of required size in bits for implementation of (a) Lamport's Logical Clocks, (b) \cite{Katz1993}, (c) Paxos Consensus (acceptor time to respond = 1 second)}
\end{figure*}



\section{Application of our Algorithm}
\label{sec:application}

In this section, we demonstrate how our algorithm can be used to transform stabilizing programs that use unbounded variables to stabilizing programs that use bounded variables and (practically bounded) physical time. First, in Section \ref{sec:katz}, we show that our approach can be applied to any algorithm that can benefit from earlier seminal work by Katz and Perry \cite{Katz1993} that focuses on adding stabilization to any program. One issue with that approach is that they need to utilize unbounded counters. We show that our approach can be used to bound counters in those applications. 
In Section \ref{sec:paxos}, we demonstrate its application in Paxos.
In Appendix \ref{sec:diff}, we show that our algorithm can be applied to diffusing computation which is also applicable to leader election, mutual exclusion, loop free routing, distributed reset, etc.
In Appendix \ref{sec:vc}, we show that our algorithm can be applied to vector clocks and the resulting algorithm is similar to that in \cite{rvc}. 
And, in Appendix \ref{sec:mutualexclusion}, we demonstrate that our approach can be used for mutual exclusion algorithm. 

Due to reasons of space, we  include only an outline of our approach and why the unbounded variables in these programs are either free or a dependent counter. We have chosen these applications because they demonstrate the generality of our approach and also they provide several insights into how the algorithm can be applied in a setting where free and dependent counters may not be immediately visible. These insights are discussed in remarks after corresponding sections. 

\subsection{Application in Katz and Perry Framework \cite{Katz1993} for adding Stabilization}
\label{sec:katz}

In  \cite{Katz1993}, Katz and Perry  presented an algorithm to add stabilization to an existing program. The key idea of the algorithm is as follows: (1) An initiator performs a snapshot of the system using the algorithm by Chandy and Lamport \cite{cl85}. (2) If the snapshot indicates that the program state is not legitimate then it performs a reset whereby the program state is restored to a legitimate state and the computation proceeds thereafter. 

To enable calculation of the snapshot and to perform reset without stopping the program execution, they utilize a round number --\textit{which is an unbounded integer variable}-- This value is incremented every time a new reset is performed. Furthermore, if a process in round $x$ receives a message with round $y$ then (1) if $x < y$, the process moves to round $y$, (2) if $x=y$ then the process treats it as a normal program execution and (3) if $x > y$ then it ignores that message. 

While the round number in \cite{Katz1993} does not meet the requirements of the free or dependent counter directly, we can modify the program slightly so that the new variables meet the constraints of free/dependent counter. First, we change the algorithm so that after every snapshot, a reset is performed. However, a Boolean variable is used to identify that this reset is fake and, hence, processes should simply move to the next round but continue accepting previous messages and not perform actual reset. 
Towards this end, we maintain the following variables: (1) variable $nr$ that is maintained at the initiator only to identify the next round it should use for performing reset, (2) variable $cr$ that is maintained at all processes to identify the current round. This variable is in the same spirit as the algorithm in \cite{Katz1993}. (3) variable $b$ is a Boolean that identifies whether the reset being done is real or fake. This variable is always $0$ except for the moment when the process wants to perform a {\em real} reset because the snapshot indicated that the state is not legitimate, and (4) variable $lr$ that denotes the sequence number of the last {\em real} reset performed in the system.

With this change, we can observe that (1) $nr$ is a free counter; it can be increased at will. (2) $cr$ can be viewed as a dependent counter provided we split the action that increases the current round into $cr = \bot;$ and $cr = $ {\tt new value of current round}. In this case, whenever $cr$ is changed, we treat it as if it was first set to $\bot$ thereby removing this dependent counter from the system. Then, we set it to the new value thereby creating a new dependent counter in the system. (3) $lr$ is a dependent counter since $lr$ needs to be set when a new {\em real} reset is performed. It can be set to $\bot$ after sufficient time to ensure that all existing messages in the system have been delivered. 

In \cite{Katz1993}, authors have utilized channel bounds as a mechanism to bound the counters. The above discussion shows that bounding is possible without using channel bounds.

Figure \ref{fig:Katz_Perry} shows the size of counters that is sufficient to preserve stabilization provided by \cite{Katz1993}. For the sake of analysis, we consider parameters that are satisfied by any \textit{practical} system. Hence, we consider the clock drift between any two processes as at most 100 seconds. Note that protocols such as NTP \cite{ntp,PrecSyncofCompNetClocks} provide clock synchronization within 100 milliseconds. In such a system, we consider different parameters for message delay on a single channel and number of resets performed in one 100 second window to compute the size of counters. Even if one makes really conservative assumptions namely, a message delay of up to one hour and as many as 100 resets could be performed in a 100 second window, the size of the counters is very small ($21$ bits). The size is even lower for more reasonable parameters. Furthermore, Figure \ref{fig:Katz_Perry} shows that the size of counters is not very sensitive to the message delay and number of resets in one window. Therefore, the designer can make very conservative assumptions without increasing the size of counters.

\begin{remark}
Note that the above discussion also illustrates that neither the requirement that the dependent counter cannot be changed nor that it is reset to $\bot$ is restricting. Essentially, we need to treat an update of a dependent variable as a two-step process where we first remove the old value of the dependent counter and then initialize it as a new dependent counter (which happens to have the same name) with the new value. 
All that our definition requires is that the old value is no longer relevant for the subsequent computations and that the new value of the dependent counter is set to a recent value of some free counter. 
\end{remark}

\subsection{Paxos Based Consensus}
\label{sec:paxos}

A Paxos based consensus protocol has the following features: (1) Proposer $c$ proposes a {\em prepare} request with a sequence number $c.seq$ to the {\em acceptors} (2) Each replica {\em accepts} the request if it has not accepted a request with a higher sequence number. To do so, each acceptor $a$ maintains $a.seq$ which is the highest sequence number it has seen. (3) If an acceptor replies {\em NO}, it also notifies the proposer the value of $a.seq$ so that the proposer can choose a number higher than $a.seq$ for its subsequent request. (4) If a proposer receives sufficiently many {\em YES} responses (the precise number depends upon the number of failstop/byzantine faults we want to tolerate) it sends {\em accept} request to the {\em acceptors}. (5) An acceptor {\em accepts} this request iff it has not already responded to a prepare request with a higher sequence number. (6) A value is {\em chosen} provided sufficiently many acceptors {\em accept} the accept request. 

Observe that in this protocol we have sequence numbers maintained by proposers and acceptors. We associate two sequence numbers for each  proposer; $PendingSeq$ that denotes the sequence number of a pending request, if any. And, $NextSeq$, that denotes the sequence number it would use for a future request. 
Observe that $NextSeq$ is a free counter. The proposer can increase it at will without affecting the correctness of the Paxos based consensus algorithm. On the other hand, $PendingSeq$ is a dependent counter; it is set to be equal to $NextSeq$ whenever a request is made. As long as there exists a bound on message delivery and time required for acceptors to send a {\em YES} or {\em NO} message, $PendingSeq$ will be valid for a limited time since  each pending request will be accepted or rejected within a finite time. After this time, the value of $PendingReq$ will no longer be relevant and can be set to $\bot$. If the proposer chooses to send a new request, $PendingReq$ will be set to a different value. 

\begin{remark}
A paxos algorithm typically uses only one variable to model $NextSeq$ and $PendingSeq$, which is always an integer (and is never set to $\bot$). However, in this case, this variable is neither a free nor a dependent counter, as it is never set to $\bot$ and it cannot be increased at will. However, by having two variables, we can observe that $NextSeq$ is a free counter. To make $PendingReq$ a dependent counter, we split the action where the proposer learns that its previous request has failed (when we set $PendingReq$ to $\bot$) and when it starts a new request. In this case, each instance of the pending request is a new dependent counter. 
\end{remark}
Finally, the sequence number associated with acceptors is also a dependent counter. It is relevant only until (1) it receives a new request with a higher sequence number or (2) if it has not received a request for a long enough time, thereby it can treat a future request as if it is the first request it has ever received. 

Once again, we use very conservative assumptions to identify the size of these counters. Similar to Section \ref{sec:katz}, we assume that clocks are synchronized to be within $100$ seconds of each other. In such a system, even if a message can be delayed upto 1 hour and there are $10^9$ requests in one $100$ second window, $46$ bits are sufficient. And, for more reasonable assumptions, even less bits are required for each counter. Once again, since the number of bits do not increase substantially as we increase message delay/number of requests, the designer can utilize extremely conservative assumptions. For example, the number of bits for a counter only increases from $41$
 to $46$ bits even if the message delay is increased from $1$ second to $4000$ seconds.

\section{Discussion and Related Work}
\label{sec:related}

One of the questions raised by our work is whether the timing properties utilized in our transformation algorithm affect the generality of the algorithm. We note that given the impossibility of solving consensus, leader election  and several other interesting problems in asynchronous systems \cite{flp}, any fault-tolerant solution to these programs must make some reasonable assumptions about the underlying system. Some typical guarantees are process speeds, message delays etc. Our algorithm utilizes assumptions of this nature to identify free and dependent counters. \textbf{Also, as shown in our case studies, even trivially satisfiable requirements --such as clocks differ by at most $100s$ (when current state of art guarantees synchronization to be less than $10$ milliseconds) or number of events in a given region is $10^9$ or a message is delivered within an hour-- are suffice to bound the variables within acceptable limits.  }

Not all programs that use unbounded counters can be used with our transformation algorithm. For example, consider algorithms such as those for causal broadcast that maintain an unbounded counter to keep track of the number of messages sent by each process. We cannot treat this as a free counter since incrementing it would require us to send broadcast messages. In other words, there are programs where unbounded counters may be neither free nor dependent. 

Our work also differs from previous work that uses distributed reset mechanism \cite{ag94,multireset} to bound the values of counters. Distributed reset affects all processes. By contrast, stabilization can often be achieved by only processes in the {\em vicinity} of the affected processes \cite{dgx07contain,gghp07contain}. Compared with the work in \cite{Blanchard2014} which assumes the counter size to be equal to the size of integers (32/64 bit in most systems), our approach has the potential to reduce the size of the counters. For example, the analysis from Section \ref{sec:algstepthreeexample}, shows a bound of $780$ is sufficient. In other words, the bound depends upon the need of the given application.  
%
%
Also, the algorithm in \cite{Blanchard2014} requires multiple/all processes to reset their counters if some process has to reset its counters. By contrast, our algorithm, when applied in the context of Paxos, addresses this issue by ignoring messages and resetting processes whose counters are affected rather than affecting all processes. Thus, if perturbation is small, it is anticipated that our solution will affect only the corrupted processes. 
%

%

\section{Conclusion and Future Work}
\label{sec:concl}

Our work addresses a key conflict in the context of stabilization: (1) use of unbounded variables in stabilizing programs should be avoided since any implementation of that stabilizing program would rely on allocating large enough but bounded memory to each variable and transient faults could perturb the program to a state where the large bound associated with the variable is reached, and (2) use of (practically bounded) physical time is used in many systems because corruption associated with time is typically easily detectable and correctable. It provides an alternate approach for providing \textit{practically} bounded-space stabilization by utilizing system and application properties such as clock synchronization properties, message delivery properties, etc. Since a rich class of problems easily admit unbounded state-space solutions, our approach can be used to provide solutions where all program variables are bounded. 

We demonstrated that our algorithm is applicable in several classic problems in distributed computing, namely logical clocks, mutual exclusion, vector clocks, diffusing computation and Paxos based consensus.
We also demonstrated that our work can be combined with that of \cite{Katz1993}. This work transforms a given program into a stabilizing program with unbounded counters. Our work can be used to convert those unbounded counters into bounded counters while still preserving stabilization.
This work also demonstrates that for a rich class of programs, the approach taken by non-stabilizing programs to deal with unbounded variables --provide large enough but bounded space-- is feasible even with stabilizing programs.

\bibliography{sandeep,vidhya}
\appendix
\section{Proof of Correctness}
\label{sec:appendix}

\noindent In this section, we present proofs of correctness and the step-by-step illustration of our algorithm along with some of its applications --that were omitted due to reasons of space. 
\subsection{Illustration of Our Algorithm in Logical Clocks (Section \ref{sec:lamport})}
\label{sec:alg_illust}

\textbf{Illustration of the Step 1. }
\label{sec:algexampleone}

In this section, we identify the reason for choosing free counters of a process in region $r$ to be in the range 
[\minfreecounter{r}..\maxfreecounter{r}]. 

We use the context of logical clocks by Lamport from Section \ref{sec:lamport} to identify how this bound was derived. Note that in this program, the value of $cl.j$ for every process $j$ is a free counter.

For the sake of illustration, let us assume that the maximum increase in any free counter (i.e., $cl.j$ for any process $j$) in one global region is at most $10$. In other words, the value of $cl.j$ in a computation that goes on for $\regionsize$ global time increases by at most $10$. Assume that initially, all values of $cl.j$ are $-1$ (i.e.,logical clock value of every process) and the region of every process is $-1$. As soon as it creates the first event, it is in \regionzero. Thus, in one $\regionsize$ time, the value of $cl.j$ would increase to at most $10$. In $2\regionsize$ time, it would increase to at most $20$ and so on. 

Our first attempt to revise this program would be to require that in  \regionzero the value of $cl.j$ would be between $[0..9]$. In \regionone, $cl.j$ would be between $[10..19]$ and so on. Observe that the first property is already satisfied by the original program. However, the second property may be violated since $cl.j$ may be less than $10$ even in \regionone. We can remedy this by increasing the value of $cl.j$ as needed. Note that in this instance, the fact that $cl.j$ is a free counter is important, as it guarantees that $cl.j$ will never decrease and we are permitted to increase $cl.j$ as needed. With \regionone, we need to ensure that $cl.j$ does not increase beyond $19$, as we are not allowed to decrease it. We can try to ensure this property by the length of computation which guarantees a bound on the number of events that can be created in \regionone.

While the above approach is reasonable, it suffers from a  problem that the processes do not always agree on what the current region is. In particular, process $j$ could be in \regionone but process $k$ could still be in \regionzero. Now, if $j$ sends a message to $k$, it can cause $k$ to have a value for $cl.k$ that is outside $[0..9]$. 

Also, if process $j$ moves quickly to \regionone while process $k$ is still in \regionzero then it creates some additional difficulties. In such a system, the clock synchronization may force $j$ to {\em slow down} its clock to ensure that $k$ can catch up. (An alternative is to let process $k$ advance its clock more quickly. But we assume that we do not control the clock synchronization algorithm.)  In other words, as far as process $j$ is concerned, even if it starts with initial value of $cl.j$ to be $10$ at the beginning of \regionone, the value of $cl.j$ may exceed $19$ before $j$ enters \regiontwo. 
We can remedy the above problems with the observation that region of two processes differ by at most $1$. So, even if the  clock of $j$ is forced to slow down to let $k$ catch-up, as long as process $j$ is in the same region, its clock \textbf{will not increase by more than $30$}.  (Note that the value $30=10$*$3$ is due to the fact that \regionone of process $j$ can overlap with global regions $0, 1$ and $2$ and in each global region the increase in $cl.j$ is bounded by $10$.)

With this approach, we proceed as follows: In \regionzero, we {\em try to} ensure that the value of $cl.j$ is between $[0..29]$, in \regionone, the value of $cl.j$ is between $[30..59]$, in \regiontwo, the value of $cl.j$ is between $[60..89]$ and so on. Observe that with this change, when the first process, say $j$, moves to \regionone, all values were less than $30$. Based on the assumption about number of events in the region, as long as process $j$ is in \regionone, $cl.j$ cannot increase beyond $59$. 

We can summarize the above approach by the constraint that if the region of process $j$ equals $r$ then {\em we try to} ensure that $cl.j$ is between $[30r .. 30r+29]$. However, while process $j$ is in \regionrr, process $k$ could move to \regionrpone and if process $k$ communicates with process $j$, it could force process $j$ to increase $cl.j$ to be more than $30r+29$. However, as long as process $j$ is in \regionrr and process $k$ is in \regionrpone (which can last for at most $2\regionsize$ time), values of $cl.j$ or $cl.k$ cannot increase beyond $30r+29+20$.

Based on this observation, we define \minregion and \maxregion that identify the minimum region value held by some process and maximum region value held by some process. (Note that these values differ by at most $1$. Furthermore, the processes themselves are not aware of these values. They only know that the value of their region equals one of them.) From the above discussion, we observe that the value of $cl.j$ is in the interval $[30\minregion..30\minregion+29+20]$. Moreover, it is also guaranteed to be in $[30\maxregion-30..30\maxregion+19]$.
In addition, due to the property of regions, at some point, all processes must be in the same region. (If some processes are in region $r-1$ and some are in \regionrr, then no process can move to \regionrpone as long as some process is in region $r-1$. Thus, just before the first process moves to \regionrpone, all processes must be in the same region, namely \regionrr.) Let this region be $r$. Clearly, $\minregion$ and $\maxregion$ are equal to $r$ at this time. From the above discussion, at this point, $cl.j$ must be in $[30r .. 30r+19]$. 

The above analysis is correct if we assume that the value of $cl.j$ started with initialized values. However, if the values are corrupted, the above property may not hold. To rectify this, we change the value of $cl.j$ when we know that it is corrupted. For example, let process $j$ be in \regionrr. Since $\minregion$ is at least $r$, the value of $cl.j$ is at least $30r$. Also, since $\maxregion$ is at most $r+1$, the value of $cl.j$ is at most $30(r+1)+19$. If process $j$ finds itself in a situation where the value of $cl.j$ is outside this domain, then it resets it to $30r$. 

Additionally, as discussed above, there exists a time when all processes are in the same region. Given the local correction action to ensure that $cl.j$ is in the range $[30r .. 30(r+1)+19]$, it follows that when the first process is about to move to \regionrpone, the highest $cl$ value of any process is $30(r+1)+19$. When the first process moves to region $r+2$ (the overlap can be with at most two global regions), the increase can be up to $20$ so the highest $cl$ value is $30(r+2)+9$. Using the same argument, when the first process moves to region $r+3$, all $cl$ values are less than $30(r+3)$. Thus, when all processes move to region $r+3$, their $cl$ values are in the range $[30(r+3)..30(r+3)+29]$. And, this property is preserved for all future regions. Observe that this implies that within 3 regions, the value of the free counters is within their expected range. 

The above discussion rested on the assumption that the number of events created in one region is at most $10$.
Our algorithm generalizes this (as $\maxinc$) to adjust the free counters in the first step of our algorithm.

\textbf{Illustration of the Step 2.}
\label{sec:algexampletwo}

For the sake of illustration, suppose that in Lamport's algorithm for logical clocks, any message sent in region $x$ will be received or lost before region $x+5$. 


In Logical clocks, the value of $cl.m$ is a dependent counter. When the value of $cl.m$ is set, it is set to some current value of free counter (namely, $cl$ value of the sender process). Moreover, the value will be available for at most $5$ additional regions. In other words, $cl.m$ is a \br{0,5}-dependent counter. Hence, the above analysis requires that when a process receives a message $m$ in region $r$, it checks whether its timestamp is at least \minfreecounter{(r-7)} and at most \maxfreecounter{r}, where \maxinc equals $10$. 

\textbf{Illustration of the Step 3.}
\label{sec:algstepthreeexample}

Since we assume that $cl.m$ is a \br{0,5}-dependent counter, $\maxv=5$. Putting this value in \maxbound, we obtain 780. Hence, instead of maintaining variables $cl.j$ and $cl.m$, we maintain them as $cl.j \ mod \ 780$ and $cl.m \ mod \ 780$ respectively. Thus, 10 bits are sufficient to represent $cl.j$ and $cl.m$.


\subsection{Proof of Correctness}
\label{sec:ProofofCorrectness}

In this section, we show that our algorithm consisting of 3 steps  preserves correctness and stabilization property of the original program. Let $p$ be the original program and let $p'$ be the program obtained after all 3 steps. To facilitate the proof we define the notion of {\em modulo state}.

\begin{definition}
\label{def:modstate}{\bf{(modulo $m$ state).}} 
Let $s$ be a state, modulo $m$ state of $s$, where $m$ is an integer, denoted by $s^m$ is obtained by changing each variable $x$ with unbounded domain in $s$ to $x$ $mod$ $m$. 
\end{definition}

We extend this to {\em modulo state predicate} and {\em modulo computation}. For example, $s_0^m, s_1^m \cdots$ is a modulo $m$ computation of $p$ iff $s_0, s_1, \cdots$ is a computation of $p$ and $\forall w: s_w^m$ is the modulo $m$ state of $s_w$. 

\begin{lemma}
\label{thm:closure}
In the absence of faults, (i.e., starting from a valid initial state), any computation of $p'$ is also a modulo $m$ computation of $p$, where $m = \maxbound$.
\end{lemma}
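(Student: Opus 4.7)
The plan is to prove Lemma~\ref{thm:closure} by exhibiting a lock-step correspondence between any computation $t_0, t_1, \ldots$ of $p'$ starting from a valid initial state and a computation $s_0, s_1, \ldots$ of $p$, with the invariant $t_l = s_l^m$ preserved at every step. The entire argument reduces to showing that, along a fault-free execution, the modulo-$m$ encoding used by $p'$ loses no information about the counter values of $p$.

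First I would establish a range invariant for $p$ itself in the absence of faults: at every reachable state, each free counter of a process currently in region $r$ lies in $[\minfreecounter{r}\ldots\maxfreecounter{r}]$, and each $(\beforeregiondistance,\afterregiondistance)$-dependent counter lies in $[\mindepcounter{r}\ldots\maxdepcounter{r}]$. The free-counter bound follows from the range analysis sketched in Appendix~\ref{sec:algexampleone}, combining the per-global-region growth bound $\maxinc$ with the fact that one process-region can overlap up to three global regions and that the regions of two processes differ by at most $1$. The dependent-counter bound then follows from Definition~\ref{def:dep_counterregion}: a dependent counter inherits its value from a free counter at most $\beforeregiondistance+\afterregiondistance \le \maxv$ global regions ago, which translates into the lower bound $\mindepcounter{r}$ once the process-vs-global region slack is added. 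Because the invariant holds, the ``correction'' branches inside \textsc{checkfc} and \textsc{checkdc} are never triggered, so these functions act as no-ops along the computation.

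Given this invariant, I would verify that \textsc{convertfc} and \textsc{convertdc} genuinely invert the modulo operation: the legitimate window for a dependent counter has length exactly $\intervallength$ and for a free counter strictly less, while $m = 3\,\intervallength$, so given any residue $x$ there is a unique integer $y$ in the legitimate range with $x \equiv y \pmod{m}$. With uniqueness in hand, the inductive step of the simulation is mechanical: assuming $t_l = s_l^m$ with the invariant at $s_l$, an action of $p'$ fired at $t_l$ converts each counter appearing in its guard back to the unique matching integer (which equals the value in $s_l$), evaluates the original guard on those integers, runs the original statement, passes the results through the now-trivial checks, and finally reduces modulo $m$. Hence the same action is enabled at $s_l$ in $p$ and its execution yields a state $s_{l+1}$ with $s_{l+1}^m = t_{l+1}$, restoring the invariant. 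The main obstacle I anticipate is the arithmetic bookkeeping needed to confirm the uniqueness inequality $\maxdepcounter{r} - \mindepcounter{r} < m$ at every site where a counter is read or written --- including when a dependent counter is transmitted between processes whose regions differ by one --- since this is precisely what makes the modulo encoding invertible; once that is checked, the rest of the proof is routine simulation bookkeeping.
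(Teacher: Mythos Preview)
Your proposal has a genuine gap. You claim that in $p$ itself, starting from a valid initial state, every free counter of a process in region $r$ lies in $[\minfreecounter{r}\ldots\maxfreecounter{r}]$, and you then conclude that \textsc{checkfc} is always a no-op in $p'$. The upper bound is fine, but the lower bound is false for $p$: the original program has no mechanism that forces a free counter upward. In Lamport's clocks, for instance, if process $j$ happens to execute no events for several regions, $cl.j$ stays put while $r$ advances, so $cl.j < 3r\cdot\maxinc$ is perfectly reachable in $p$. Consequently, when the region of $j$ advances in $p'$, the call to \textsc{checkfc} in step II.1 \emph{does} fire and resets the counter to the new $minfree$.

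This is exactly the case the paper's proof isolates. Its inductive step splits a transition of $p'$ into two possibilities: either it executes an action of $p$, or it \emph{increases a free counter} (the region-change reset). The second case is legitimated not by the range invariant but by clause~(ii) of Definition~\ref{def:free_counter}, which says that appending a state in which a free counter has been incremented is always a valid extension of a computation prefix of $p$. Your simulation only matches ``the same action of $p$'' and therefore omits this second kind of step entirely. The fix is small but essential: drop the claim that the lower bound holds in $p$, acknowledge that \textsc{checkfc} may bump a free counter at a region boundary, and map that bump to the free-counter-increment transition that Definition~\ref{def:free_counter}(ii) guarantees exists in $p$. Once you insert these extra steps into the witnessing $p$-computation, the range invariant does hold along that particular computation, and the rest of your argument (in particular your careful treatment of the invertibility of the modulo encoding, which the paper leaves implicit) goes through.
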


\begin{proof}
Observe that the bounds on free counters are derived based on the property that in any global region, the value of the free counter would be in the range $[\minfreecounter{r}..\maxfreecounter{r}]$. Hence, starting from an initial state, there would be no need to reset a free counter before/after execution of an action. Likewise, there is no need to reset a dependent counter. 
The only additional change to free counters is when a process moves from one region to another.

Now, consider a computation of $p'$, say $\br{s_0, s_1, s_2, \cdots}$. Since $s_0$ is an initial state of $p$, $\br{s_0}$ is a modulo $m$ computation-prefix of $p$ from an initial state.

Next, assume that \br{s_0, s_1, \cdots s_j} is a modulo computation-prefix of $p$. 
From the above discussion, $(s_j, s_{j+1})$ either executes an action of $p$ or it increases a free counter. In the former case, \br{s_0, s_1, \cdots s_j, s_{j+1}} is clearly a modulo $m$ computation of $p$. In the latter case, \br{s_0, s_1, \cdots s_j, s_{j+1}} is also a modulo $m$ computation of $p$ by the property of free counters, namely that their value can be incremented at anytime. 

From the above discussion it follows that $\br{s_0, s_1, s_2, \cdots}$ is a modulo $m$ computation of $p$, where $m = \maxbound$. 
\end{proof}

\begin{lemma}
\label{thm:conv}
A computation of $p'$ that starts from an arbitrary state has a suffix that is a modulo $m$ computation of $p$, where $m = \maxbound$.
\end{lemma}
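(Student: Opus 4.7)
The plan is to exhibit a state $s_n$ in the computation of $p'$, reached within a bounded number of region transitions, such that the suffix $\langle s_n, s_{n+1}, s_{n+2}, \ldots\rangle$ satisfies the same invariant used in Lemma \ref{thm:closure}, and then invoke that lemma to conclude the suffix is a modulo $m$ computation of $p$. The invariant is that every free counter $fc.j$ lies in $[\minfreecounter{r}..\maxfreecounter{r}]$ for $r$ the current region of $j$, and every dependent counter lies in $[\mindepcounter{r'}..\maxdepcounter{r'}]$ for $r'$ the current region of its holder.

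First I would argue convergence of the free counters. The routine \texttt{checkfc} fires on every region change, on every update of $fc.j$, and on every use of $fc.j$ in a guard. So any free counter that ever influences subsequent behaviour is forced into $[\minfreecounter{r}..\maxfreecounter{r}]$ before that influence takes effect. Replaying the three-interval analysis of Section \ref{sec:stepthree} -- which depends only on the fact that process regions differ by at most one and that every process eventually acts -- shows that within three region transitions every relevant free counter lies in its legitimate range and remains there thereafter.

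Next, for dependent counters I would rely on clause (2) of Definition \ref{def:dep_counterregion}: any dependent counter present in the initial state must become $\bot$ within $\afterregiondistance$ regions, because the transformation in Figure \ref{fig:algo} does not alter the original program events that retire dependent counters (e.g., message delivery or loss), and the timing assumptions of Section \ref{sec:sysmodel} guarantee those events occur in bounded time. Hence after at most $\maxv$ regions no initially corrupted dependent counter survives. Any dependent counter created after the free counters have converged is, by clauses III.2 and III.5 of Figure \ref{fig:algo}, copied from a currently legitimate free counter and then passed through \texttt{checkdc}, so it also lies in the legitimate range. Combining the two phases, after at most $3+\maxv$ region transitions the desired state $s_n$ is reached, and Lemma \ref{thm:closure} applied to the suffix starting at $s_n$ completes the proof.

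The main obstacle I anticipate is the corner case of a dependent counter whose corrupted initial value happens to lie inside $[\mindepcounter{r}..\maxdepcounter{r}]$ and therefore escapes correction by \texttt{checkdc}: such a counter need not correspond to any value ever written by a free counter in an actual computation of $p$. The lifetime clause of Definition \ref{def:dep_counterregion} removes it within $\afterregiondistance$ regions regardless, and any subsequent re-creation draws from a now-legitimate free counter; this is the only step where the argument has to invoke the semantics of the underlying distributed program rather than merely the syntactic transformation. A secondary subtlety is that a \texttt{checkfc} reset may decrease a free counter -- an action not permitted by $p$ -- but such resets occur only during the transient prefix before $s_n$ and so do not threaten the claim about the suffix.
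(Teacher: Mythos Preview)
Your proposal is correct and follows essentially the same route as the paper: local correction via \texttt{checkfc} forces free counters into range, the bounded lifetime of dependent counters flushes any residual corruption, and thereafter every step of $p'$ corresponds to a step of $p$. The paper packages the dependent-counter argument via the three-interval partition of \textit{MAXBOUND} (Section~\ref{sec:stepthree}) rather than by counting $3+\maxv$ regions directly, and it argues the final correspondence inline instead of invoking Lemma~\ref{thm:closure}, but these are presentational differences only.
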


\begin{proof}
Consider a computation of $p'$. Due to local correction of free counters, the value of any free counter of a process in \regionrr would be in the range $[\minfreecounter{r}..\maxfreecounter{r}]$. Note that even with taking modulo under \maxbound, this value can be uniquely identified given the physical time, 
As discussed above, we partition \maxbound into three intervals as shown in Figure \ref{fig:bounded_intervals_for_counters}. Wlog, let us assume that the current Interval is $0$ as determined by the physical time. As discussed above, if we consider the computation of $p'$ in the entire Interval $1$ then when the first process enters Interval $2$, all counters are within Interval $1$. Hence, the value of any counter can be uniquely determined from the physical time even if it is maintained in modulo \maxbound  arithmetic. 
Thus, from this point forward, every transition of $p'$ is also a modulo $m$ transition of $p$. 
It follows that, every computation of $p'$ has a suffix that is a modulo $m$ computation of $p$, where $m = \maxbound$.
\end{proof}

\begin{theorem}
\label{thm:correct}
If $p$ is stabilizing to state predicate $S$ then $p'$ is stabilizing to $S^m$, where $S^m = \{ s^m | s \in S$ and $s^m$ is the modulo $m$ state of $s$ \}, where $m = \maxbound$. 
\end{theorem}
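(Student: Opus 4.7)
The plan is to derive Theorem~\ref{thm:correct} as a bookkeeping combination of Lemmas~\ref{thm:closure} and~\ref{thm:conv} with the stabilization hypothesis on $p$, verifying separately the two clauses of Definition~\ref{def:stabilization} for $p'$ with respect to $S^m$.

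First I would establish \emph{closure}. Let $s \in S^m$, so $s = \bar{s}^m$ for some $\bar{s} \in S$. By Lemma~\ref{thm:closure}, every computation of $p'$ starting from $s$ is the modulo-$m$ image, state by state, of some computation of $p$ starting from $\bar{s}$. Since $p$ is stabilizing to $S$, that lifted computation stays inside $S$ by the closure clause of Definition~\ref{def:stabilization}; taking modulo $m$ state-wise then shows that the computation of $p'$ stays inside $S^m$. Hence $S^m$ is closed under $p'$.

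Next I would establish \emph{convergence}. Let $\rho' = s'_0, s'_1, \ldots$ be any computation of $p'$ starting from an arbitrary state $s'_0$. By Lemma~\ref{thm:conv}, there exists an index $N$ such that the suffix $s'_N, s'_{N+1}, \ldots$ is a modulo-$m$ computation of $p$, i.e., it is of the form $t_0^m, t_1^m, \ldots$ for some computation $t_0, t_1, \ldots$ of $p$. Since $p$ is stabilizing to $S$, this lifted computation of $p$ reaches $S$ within finitely many steps, say at index $k$; then $s'_{N+k} = t_k^m \in S^m$ by definition of $S^m$. Hence $\rho'$ reaches $S^m$. Combined with the closure argument, every computation of $p'$ converges to $S^m$ and stays there, which is exactly stabilization of $p'$ to $S^m$.

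The main obstacle, if any, is already absorbed into the two supporting lemmas: once they are in hand, the theorem is essentially a two-line corollary. The subtle points one must be careful about are (i) that in the closure step the preimage $\bar{s}$ used to instantiate the lifted computation of $p$ is really in $S$ (guaranteed because we chose $s^m \in S^m$ and $S^m$ is defined as the image of $S$), and (ii) that in the convergence step the guarantee of Lemma~\ref{thm:conv} kicks in only after a finite transient corresponding to the intervals described in Section~\ref{sec:stepthree} — but this transient is itself finite, so it does not interfere with the eventual reaching of $S^m$. No genuinely new argument is needed beyond these observations.
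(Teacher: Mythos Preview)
Your proposal is correct and follows essentially the same approach as the paper: derive closure from Lemma~\ref{thm:closure} and convergence from Lemma~\ref{thm:conv}, then combine with the stabilization hypothesis on $p$. In fact your write-up is considerably more explicit than the paper's own proof, which simply cites the two lemmas without spelling out how the stabilization of $p$ to $S$ is used in each part.
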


\begin{proof}
To show that $p'$ is stabilizing, we need to show that
\begin{enumerate}
\item If $p'$ starts from an arbitrary state then it recovers to its legitimate states. This follows from Lemma \ref{thm:conv}.

\item If $p'$ starts from a legitimate state then in the absence of faults, it remains in legitimate states forever and it is correct with respect to its specification. This follows from Lemma \ref{thm:closure}.
\end{enumerate}
\end{proof}
\subsection{Application of our Algorithm in Diffusing Computation}
\label{sec:diff}

\begin{figure*}
\centering     
\subfigure[]{\label{fig:DiffusingComputation}\includegraphics[width=53mm,height=50mm]{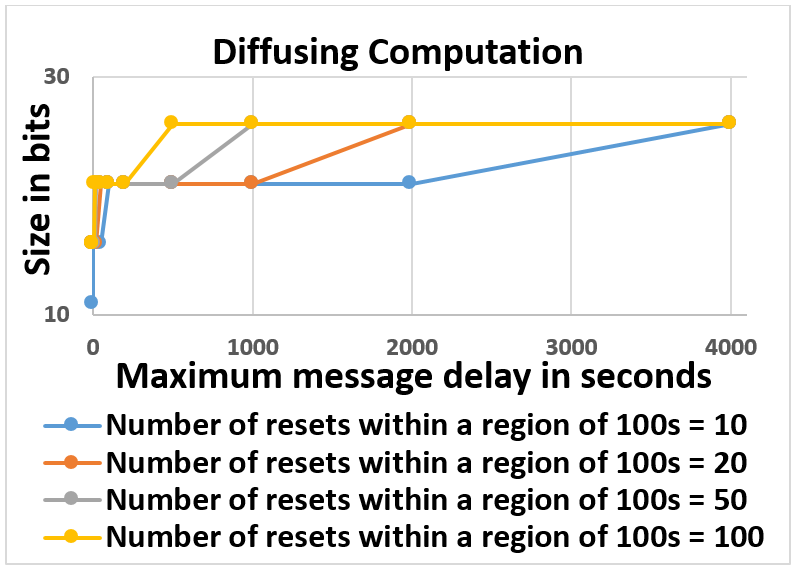}}
\subfigure[]{\label{fig:VectorClocks}\includegraphics[width=51mm,height=50mm]{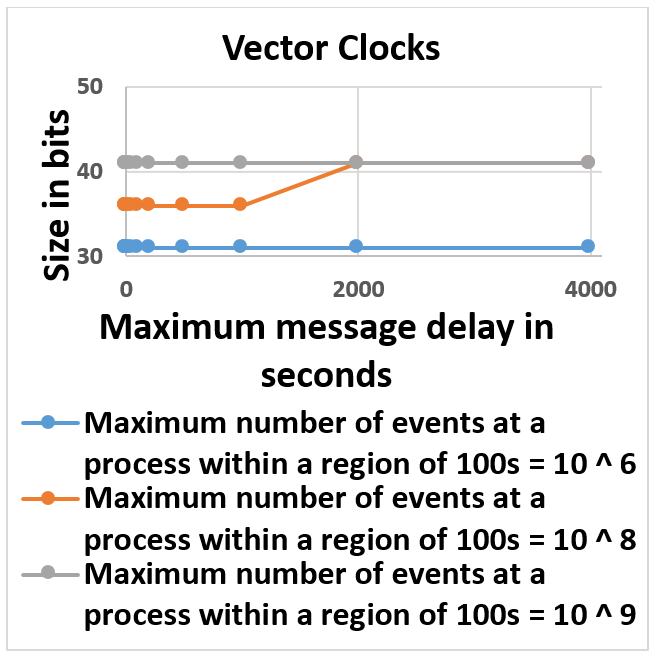}}
\subfigure[]{\label{fig:MutualExclusion}\includegraphics[width=51mm,height=50mm]{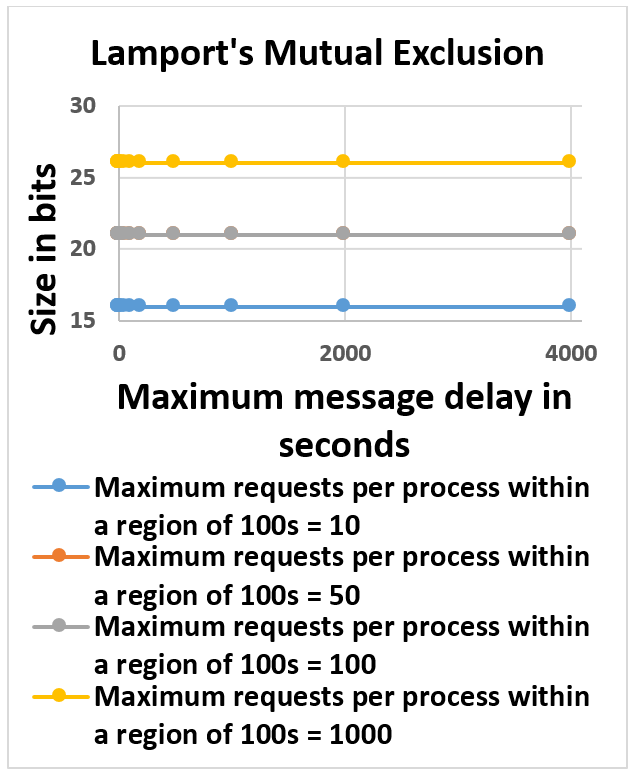}}
\caption{Analysis of required size in bits for implementation of (a) Diffusing Computation (number of processes = 100), (b) Vector Clocks (t=10), (c) Lamport's Logical Clocks based Mutual Exclusion (number of processes = 100).}
\end{figure*}

The problem of diffusing computation \cite{ds2} is intended to check/modify all processes in a given system. It is used in many applications such as ensuring loop free routing\cite{loopfree}, leader election \cite{vasudevan,Lee2007}, termination detection \cite{ds2}, mutual exclusion \cite{ak98masking}, and distributed reset \cite{ag94}. A key property of diffusing computation is that some process (or more than one process) may initiate it. This process is called the initiator.
Upon initiation, the process forwards it to its neighbors. This is called the {\em propagation phase}. When the neighbors receive this diffusing computation for the first time, they forward it to their neighbors. If they receive the same diffusing computation again --it can happen since there are several paths from the initiator to the given process-- they acknowledge it but do not forward it to others.
When a process receives the diffusing computation from all its neighbors, it begins the {\em completion phase} and sends an acknowledgement to its parent, i.e., the process from which it received the diffusing computation for the first time. 
When the initiator completes its diffusing computation, it is guaranteed that all processes in the system (that were present throughout the computation) have received and completed their diffusing computation.

One important requirement of diffusing computation is that a process will have to know whether the diffusing computation that it has received is the same as the one it had received before. This is achieved by using sequence numbers; the initiator utilizes a higher sequence number every time it begins a new diffusing computation. A straightforward approach to achieve this is via an unbounded sequence number. (If there are multiple initiators, we can use the ID of the initiator and sequence number.) 

Following our algorithm, we can observe the following:
\begin{itemize}
\item The sequence number of the initiator is a free counter; it can be increased by any value when the initiator begins a new diffusing computation. 
\item The sequence number at other (non-initiator) processes is a dependent counter. It is only relevant when the process begins propagation of the diffusing computation and ends when it completes the diffusing computation. The specific values of $(\beforeregiondistance, \afterregiondistance)$ for this dependent counter would be determined by the worst case time it would take for a diffusing computation to complete. 
\end{itemize}

Thus, our algorithm can be applied to bound the variables of a stabilizing diffusing computation algorithm. Figure \ref{fig:DiffusingComputation} shows the size of counters needed to achieve this. Once again, even if the number of diffusing computations initiated by \textit{one process} in a $100$ seconds window is $100$ and message delay is up to $1$ hour, the number of bits required is $26$. Furthermore, a process can ensure that the number of diffusing computations initiated by it satisfies this limit by simply counting the number of resets in one window. 

\subsection{Application of our Algorithm in Vector Clocks}
\label{sec:vc}

We discussed the application of our algorithm in Lamport's logical clocks in Section \ref{sec:lamport}. We can also extend it to vector clocks \cite{fidgeVC,matternVC} or hybrid vector clocks \cite{opodis2015}. Vector clocks maintain the variable $vc.j.k$ for each pair $j$ and $k$. This variable captures the knowledge that $j$ has about $k$. And,  $vc.j.j$ denotes a counter maintained by $j$ for itself. In this program, $vc.j.j$ is a free counter; $j$ can increment the counter maintained by itself by any value it desires. For $j \neq k$, $vc.j.k$ is a dependent counter, provided the underlying communication graph is strongly connected and there exists a time $t$ such that a message (timestamped with vector clocks) is sent on every link in time $t$. Following this approach, we can obtain a stabilizing program for vector clocks that uses bounded counters. The resulting program is the same as that in \cite{rvc}. In other words, our algorithm can be utilized to derive the program in \cite{rvc}. The size of a counter with vector clocks is small as shown in Figure \ref{fig:VectorClocks} even in scenarios where $10^9$ events are created in each window (of size 100 seconds) and message delay as long as an hour.

\subsection{Application of our Algorithm in Mutual Exclusion}
\label{sec:mutualexclusion}

The classic algorithm by Lamport \cite{lamport} for mutual exclusion utilizes logical clocks (recalled in Section \ref{sec:lamport}). It works as follows: (1) All messages are timestamped with logical timestamps presented in Section \ref{sec:lamport}. (2) When a process wants to access the critical section, it sends a request to all processes. (3) When a process receives the request, it adds the request timestamp to its queue and replies to the requesting process. (3) A process enters critical section iff it has received replies from all processes and if the smallest request contained in its queue corresponds to its own request. And, (4) finally, after a process is done with its critical section, it sends a release message to all other processes thereby allowing others to remove its corresponding request from their queues. 

While this algorithm is typically not viewed as a stabilizing algorithm, it can be made stabilizing with simple local checks and corrections. For example, if the queue of process $j$ contains a request from $k$, but process $k$ did not make the request then this request should be removed. The algorithm will ensure stabilization from this state, but it would still involve counters that are unbounded. 

In this program, we can observe the following: (1) As shown in Section \ref{sec:lamport}, the value of $cl.j$, the timestamp of process $j$ is a free counter. (2) Timestamps contained in any message are dependent counters. And, (3) the timestamps saved in the request queue or contained in a request/release message are dependent counters. 
\begin{remark}
Observe that in the timestamping algorithm \cite{lamport}, the timestamp of a message became irrelevant as soon as the message was received. However, when the same timestamp was used in the mutual exclusion algorithm, even though the message timestamp became irrelevant, it was also saved in the request queue. In other words, it extended how long a dependent counter remains relevant. In other words, superimposing another program on an existing stabilizing program may increase the time for which a dependent counter is relevant thereby making it necessary to increase the bound associated with those counters. 
\end{remark}
\clearpage
\subsection{Summary of Notations}
\label{sec:SummaryofNotations}

\begin{center}
\textbf {Generic Variables}
\\[1mm]
\begin{tabular}{ |c|l| }
\hline
$p$ & program \\
\hline
$V_p$ & set of variables of program p \\
\hline
$SV_p$ & dynamic-sized equivalent of $V_p$, i.e., a\\
& dynamically changing collection of only\\ 
& simple variables, obtained by unraveling\\ 
& complex variables of $V_p$ into their constituent\\
& simple variables \\
\hline
$A_p$ & set of actions of program p\\
\hline
$s$ & state of program $p$\\
\hline
$s_l$ & $l^{th}$ state in a computation of program $p$\\
\hline
$guard$ & condition involving variables in $V_p$\\
\hline
$statement$ & task involving update of a subset of variables\\
& in $V_p$\\
\hline
$\rho$, $\rho'$ & computation prefixes\\
\hline
$x$ & variable in $V_p$\\
\hline
$x(s)$ & value of variable $x$ in state $s$ \\
\hline
$fc$ & free counter\\
\hline
$fc(s_l)$ & value of free counter $fc$ in state $s_l$ \\
\hline
$w$, $a$, $d$ & positive integers unless specified otherwise\\
\hline
$k_b, k_f$ & used to characterize the \textit{life} of a dependent\\ 
& counter in terms of program steps\\
\hline
$dc$ & dependent counter\\
\hline
$S$ & set of states\\
\hline
$RS$ & region size\\
\hline
$t$ & abstract global time\\
\hline
$t_j$ & physical time at process j\\
\hline
$\newfloor{\frac{t}{\regionsize}}$ & abstract global region \\
\hline
$\newfloor{\frac{t_j}{\regionsize}}$ & region of process $j$ \\
\hline
$\delta$ & duration/length of time\\
\hline
$r$ & region\\
\hline
$r_b,r_f$ & used to characterize the \textit{life} of a dependent\\ 
& counter in terms of regions\\
\hline
$max_r$ & maximum of $(r_b+r_f)$ of any dependent\\
& counter\\
\hline
$max_{inc}$ & maximum increase in any free counter within\\
& a global region\\
\hline
$p'$ & program obtained by applying our\\ 
& transformation algorithm to program $p$\\
\hline
\end{tabular}
\vfill
\textbf {Variables in Lamport's Logical Clocks example}
\\
\begin{tabular}{ |c|l| }
\hline
$j,k$ & processes\\
\hline
$cl.j$ & logical clock value of process $j$\\
\hline
$m$ & message\\
\hline
$cl.m$ & message timestamp or logical clock\\
& value associated with $m$\\
\hline
$channel_{j,k}$ & complex variable that contains timestamps of\\
& messages in transit between process $j$ and \\
&process $k$\\
\hline
$v$ & number of program steps within which a\\
&message is guaranteed to be delivered at\\
& the receiver process\\
\hline
\end{tabular}
\end{center}
\vfill
\begin{center}
\textbf{Variables in Katz and Perry example}
\begin{tabular}{ |c|l| }
\hline
$x,y$ & round number\\
\hline
$nr$ & next round\\
\hline
$cr$ & current round\\
\hline
$lr$ & round number when the last real reset was \\
& performed\\
\hline
$b$ & boolean variable that identifies if the reset was \\
& real or fake\\
\hline
\end{tabular}
\end{center}
\vspace{2mm}
\begin{center}
\textbf{Variables in Paxos based Consensus example}
\begin{tabular}{ |c|l| }
\hline
$c.seq$ & sequence number of the request made\\ 
& by proposer $c$\\
\hline
$a.seq$ & highest sequence number seen by the \\
& acceptor $a$\\
\hline
$PendingSeq$ & sequence number of pending request\\
\hline
$NextSeq$ & sequence number that would be used \\
& for a future request\\
\hline
\end{tabular}
\end{center}
\vspace{2mm}
\begin{center}
\textbf{Variables in Vector Clocks example}
\begin{tabular}{ |c|l| }
\hline
$vc.j$ & vector clock maintained at process $j$\\
\hline
$vc.j.k$ & highest clock or counter value of process $k$ \\ 
& that process $j$ is aware of\\
\hline
\end{tabular}
\end{center}

\end{document}